\newtheorem{theorem}{Theorem}
\newtheorem{lemma}{Lemma}
\newtheorem{remark}{Remark}
\newcommand{\mR}{\mathbb{R}}
\newcommand{\mE}{\mathbb{E}}
\newcommand{\mZ}{\mathbb{Z}}
\newcommand{\cD}{\mathcal{D}}
\newcommand{\cP}{\mathcal{P}}
\newcommand{\ux}{\underline{x}}
\newcommand{\uxb}{\underline{x \grave{}}}
\newcommand{\pj}{\partial_{x_j}}
\newcommand{\pjb}{\partial_{{x \grave{}}_{j}}}
\newcommand{\pkb}{\partial_{{x \grave{}}_{k}}}
\newcommand{\pI}{\partial_{x_i}}
\newcommand{\pIb}{\partial_{{x \grave{}}_{i}}}
\newcommand{\pIcont}{\partial_{x_i} \rfloor}
\newcommand{\pk}{\partial_{x_k}}
\newcommand{\px}{\partial_x}
\newcommand{\upx}{\partial_{\underline{x}}}
\newcommand{\upxb}{\partial_{\underline{{x \grave{}}} }}
\newcommand{\pxcont}{\partial_x \rfloor}
\newcommand{\pjcont}{\partial_{x_j} \rfloor}
\newcommand{\pjbcont}{\partial_{{x \grave{}}_{j}} \rfloor}
\newcommand{\pkcont}{\partial_{x_k} \rfloor}
\newcommand{\pkbcont}{\partial_{{x \grave{}}_{k}} \rfloor}
\newcommand{\fermdirac}{{e \grave{}}_{2j} \partial_{{x \grave{}}_{2j-1}} - {e \grave{}}_{2j-1} \partial_{{x \grave{}}_{2j}}}
\begin{document}

\title{A Clifford analysis approach to superspace}

\author{H.\ De Bie\thanks{Research assistant supported by the Fund for Scientific Research Flanders (F.W.O.-Vlaanderen), E-mail: {\tt Hendrik.DeBie@UGent.be}} \and F.\ Sommen\thanks{E-mail: {\tt fs@cage.ugent.be}}
}

\date{\small{Clifford Research Group -- Department of Mathematical Analysis}\\
\small{Faculty of Engineering -- Ghent University\\ Galglaan 2, 9000 Gent,
Belgium}}

\maketitle

\begin{abstract}
A new framework for studying superspace is given, based on methods from Clifford analysis. This leads to the introduction of both orthogonal and symplectic Clifford algebra generators, allowing for an easy and canonical introduction of a super-Dirac operator, a super-Laplace operator and the like. This framework is then used to define a super-Hodge coderivative, which, together with the exterior derivative, factorizes the Laplace operator. Finally both the cohomology of the exterior derivative and the homology of the Hodge operator on the level of polynomial-valued super-differential forms are studied. This leads to some interesting graphical representations and provides a better insight in the definition of the Berezin-integral.
\end{abstract}
\textbf{Keywords :}   Clifford analysis, superspace, Dirac operator, radial algebra\\

\section{Introduction}
\label{intro}

The aim of this paper is to examine the usefulness of a novel approach to the study of superspace. This approach is inspired by ideas of Clifford analysis (see \cite{MR697564}, \cite{MR1130821} and \cite{MR1169463} and references therein), a function theory which is obtained by formally taking the square root of the Laplace operator in Euclidean space.

It is possible to reformulate the basic features of Clifford analysis in an abstract framework, called radial algebra (see \cite{MR1472163}). This radial algebra has many interesting applications, for example in the algebraic analysis of several Dirac operators and associated systems (see \cite{MR2089988} and references therein). One can also use this framework to define a superspace version of Clifford analysis (see \cite{DBS1} and an earlier attempt in \cite{MR1771370}).

One of the main features of this new approach to superspace is the introduction of the so-called super-dimension $M$, a number which can take both positive and negative integer values and acts as a global dimension. This will be illustrated in the sequel.

The new approach also offers several advantages, e.g. a lot of interesting operators are obtained in a rather easy and canonical way, such as the super-exterior derivative, the super-Hodge coderivative, the super-Laplace operator, etc. Moreover, we will use these new techniques to derive super-versions of the classical lemma of Poincar\'e and its dual. This gives a strong motivation for the definition of the Berezin integral, which is usually defined in a rather ad hoc way. Also some interesting graphical representations of spaces of polynomial-valued differential forms are obtained. Moreover, compared to the classical approaches to superspace (from either algebraic geometry, see \cite{MR732126,MR565567,MR0580292}, or from differential geometry, see \cite{MR778559,MR574696}), our approach has the advantage of requiring fewer advanced mathematical techniques, and thus being more easily accessible to the average physicist.

The paper is organized as follows. We start with an introduction to Clifford analysis on superspace. After constructing the specific algebra we need for our purposes, we define the basic differential operators and contractors. Then we introduce the basic objects typically arising in Clifford analysis, such as vector variables and differentials, the Dirac-, Euler- and Gamma-operators. Next the basic commutation and anti-commutation relations between these operators are established, which will immediately show the importance of the super-dimension. Thereafter we define a Hodge coderivative which behaves exactly as classically. In the last sections we give some diagrammatic representations, we prove super-versions of the Poincar\'e lemmata and establish a link with the definition of the Berezin integral.

\section{Clifford analysis in superspace}

\subsection{The basic objects}

We consider an algebra with six different types of generators, namely $m$ commuting co-ordinates $x_1, \ldots, x_m$ and corresponding differentials $dx_1, \ldots, dx_m$, $2n$ anti-commuting co-ordinates ${x \grave{}}_1, \ldots, {x \grave{}}_{2n}$ and corresponding differentials $d{x \grave{}}_1, \ldots, d{x \grave{}}_{2n}$, $m$ orthogonal Clifford generators $e_1, \ldots, e_m$ and $2n$ symplectic Clifford generators ${e \grave{}}_1, \ldots , {e \grave{}}_{2n}$. They have to satisfy the following relations:

\[
\begin{array}{lll} 
x_i x_j =  x_j x_i&\quad&dx_i d x_j = -d x_j  dx_i \\
{x \grave{}}_i {x \grave{}}_j =  - {x \grave{}}_j {x \grave{}}_i&\quad& dx_i d {x \grave{}}_j = -d {x \grave{}}_j dx_i\\
x_i {x \grave{}}_j =  {x \grave{}}_j x_i&\quad& d {x \grave{}}_i d {x \grave{}}_j = d {x \grave{}}_j d{x \grave{}}_i\\
\\
x_i d x_j = d x_j  x_i&\quad& e_j e_k +e_k e_j = -2 \delta_{jk}\\
x_i d {x \grave{}}_j = d {x \grave{}}_j x_i&\quad&{e \grave{}}_{2j} {e \grave{}}_{2k} - {e \grave{}}_{2k} {e \grave{}}_{2j}=0\\
{x \grave{}}_i d x_j = d x_j  {x \grave{}}_i&\quad&{e \grave{}}_{2j-1} {e \grave{}}_{2k-1} -{e \grave{}}_{2k-1} {e \grave{}}_{2j-1}=0\\
{x \grave{}}_i d {x \grave{}}_j = -d {x \grave{}}_j {x \grave{}}_i&\quad&{e \grave{}}_{2j-1} {e \grave{}}_{2k} -{e \grave{}}_{2k} {e \grave{}}_{2j-1}=\delta_{jk}\\
&\quad&e_j {e \grave{}}_{k} +e_k {e \grave{}}_{j} = 0\\
\end{array}
\]

\[
e_i, {e \grave{}}_{i}  \mbox{ commute with all elements $x_i,{x \grave{}}_{i},d x_i,d {x \grave{}}_{i} $}.
\]

The algebra generated by all these elements will be denoted by $\cD\cP_{m,2n}$. Except for the Clifford generators, this algebra is similar to the one studied e.g. in the classical works by Berezin (\cite{MR0208930,MR732126}), Leites (\cite{MR565567}) and Kostant (\cite{MR0580292}). For a more unified approach see e.g. \cite{MR1175751}.
In this paper we will restrict ourselves to this polynomial algebra, but off course all kinds of completions can also be studied.

The introduction of the Clifford algebra elements will allow us to form combinations of generators where there is no real difference between commuting and anti-commuting co-ordinates and this helps to `hide away' the $\mZ_2$-grading. Moreover we will construct a set of operators, acting on $\cD\cP_{m,2n}$, which together generate an $\mathfrak{osp}(1|2)$ Lie superalgebra (see subsection \ref{diracoperator}). 

Note that we only consider superspaces with an even number of anti-commuting co-ordinates. This is necessary for the symplectic Clifford algebra (or Weyl algebra) generated by the ${e \grave{}}_i$ to be non-degenerate.

\subsection{The basic differential operators and contractors}

The basic endomorphisms acting on $\cD\cP_{m,2n}$ are the operators $\pj,\pjb,\pjcont$ and $\pjbcont$. They are defined as follows. First of all we put $\pj(1)=\pjb(1)=\pjcont(1)=\pjbcont(1)=0$. Next for an arbitrary element  $F$ of $\cD\cP_{m,2n}$, we have the following calculus rules:

\vspace{2mm}
The (classical) derivative $\pj$ satisfies
\vspace{-2mm}
\[
\begin{array}{l} 
\pj(x_k  F) = \delta_{jk} F + x_k  \pj F\\
\pj({x \grave{}}_{k} F) = {x \grave{}}_{k} \pj F\\
\pj(dx_k F) = d x_k \pj F\\
\pj(d {x \grave{}}_{k} F) = d {x \grave{}}_{k} \pj F.\\
\end{array}
\]

The partial derivative $\pjb$ with respect to an anti-commuting variable satisfies
\vspace{-2mm}
\[
\begin{array}{l} 
\pjb(x_k  F) = x_k  \pjb F\\
\pjb({x \grave{}}_{k} F) = \delta_{jk} F - {x \grave{}}_{k} \pjb F\\
\pjb(dx_k F) = d x_k \pjb F\\
\pjb(d {x \grave{}}_{k} F) = -d {x \grave{}}_{k} \pjb F.\\
\end{array}
\]

The contraction $\pjcont$with a commuting variable satisfies
\vspace{-2mm}
\[
\begin{array}{l} 
\pjcont(x_k  F) = x_k  \pjcont F\\
\pjcont( {x \grave{}}_{k} F) = {x \grave{}}_{k} \pjcont F\\
\pjcont(dx_k F) = \delta_{jk} F - d x_k  \pjcont F\\
\pjcont(d {x \grave{}}_{k} F) = -d {x \grave{}}_{k} \pjcont F.\\
\end{array}
\]

Finally, the contraction $\pjbcont$ with an anti-commuting variable satisfies
\vspace{-2mm}
\[
\begin{array}{l} 
\pjbcont(x_k  F) = x_k  \pjbcont F\\
\pjbcont( {x \grave{}}_{k} F) = - {x \grave{}}_{k} \pjbcont F\\
\pjbcont(dx_k  F) = -d x_k \pjbcont F\\
\pjbcont(d {x \grave{}}_{k} F) = \delta_{jk} F + d {x \grave{}}_{k} \pjbcont F.\\
\end{array}
\]

Moreover all operators commute with $e_i$ and ${e \grave{}}_i$. This suffices to compute the action of $\pj,\pjb,\pjcont$ and $\pjbcont$ on any element of $\cD\cP_{m,2n}$.

From the previous relations we immediately obtain the following commutation relations for the operators $\pj,\pjb,\pjcont$ and $\pjbcont$, which will be needed later on:
\vspace{-2mm}
\[
\begin{array}{l} 
\pj \mbox{ commutes with $\pk,\pjb,\pjcont, \pjbcont$}\\
\pjb \pkb = - \pkb \pjb\\
\pjb \pkcont =  \pkcont \pjb\\
\pjb \pkbcont = - \pkbcont \pjb\\
\pjcont \pkcont = - \pkcont \pjcont\\
\pjcont \pkbcont = - \pkbcont \pjcont\\
\pjbcont \pkbcont =  \pkbcont \pjbcont.\\
\end{array}
\]

Note that similar definitions can be given for the action on the right.

\subsection{The Dirac operator and some other important operators}
\label{diracoperator}

Let us first give the definition of the exterior derivative in co-ordinates:

\[
d = \sum_{i=1}^m dx_i \pI  + \sum_{i=1}^{2n} d {x \grave{}}_{i} \pIb.
\label{chainrule}
\]
The square of $d$ is zero as expected.

Now we can make combinations of Clifford algebra elements with co-ordinates, differentials and operators. First we introduce super-vector variables $x$ and super-vector differentials $dx$ by

\begin{eqnarray*}
x &=& \sum_{j=1}^m x_j e_j + \sum_{j=1}^{2n} {x \grave{}}_j {e \grave{}}_j\\
dx &=& \sum_{j=1}^m d x_j e_j + \sum_{j=1}^{2n} d{x \grave{}}_j {e \grave{}}_j.
\end{eqnarray*}

In a similar way we introduce a super-Dirac operator and a super-Dirac contractor by

\begin{eqnarray*}
\px &=& -\sum_{j=1}^m e_j \pj +2 \sum_{j=1}^{n} ( {e \grave{}}_{2j} \partial_{{x \grave{}}_{2j-1}} - {e \grave{}}_{2j-1} \partial_{{x \grave{}}_{2j}}  )\\
\pxcont &=&-\sum_{j=1}^m e_j \pjcont +2 \sum_{j=1}^{n} ( {e \grave{}}_{2j} \partial_{{x \grave{}}_{2j-1}}\rfloor - {e \grave{}}_{2j-1} \partial_{{x \grave{}}_{2j}}\rfloor ).
\end{eqnarray*}

\noindent
Now it is easy to calculate

\[
\px x = m-2n = M =\pxcont dx.
\]

This numerical parameter $M$ will be called the super-dimension. It pops up in several formulae and acts as a kind of global dimension. Morover it can be given an interesting physical interpretation, namely twice the ground-level energy of a harmonic oscillator containing $m$ bosonic and $2n$ fermionic degrees of freedom (see our upcoming paper \cite{DBS3}).

Also the squares of $x$ and $\px$ are important; we find

\begin{eqnarray*}
x^2 &=&\sum_{j=1}^n {x\grave{}}_{2j-1} {x\grave{}}_{2j}  -  \sum_{j=1}^m x_j^2\\
\px^2 &=& 4 \sum_{j=1}^n \partial_{{x \grave{}}_{2j-1}} \partial_{{x \grave{}}_{2j}} -\sum_{j=1}^{m} \pj^2.
\end{eqnarray*}
$x^2$ is a generalization of the norm squared of a vector in Euclidean space, $\px^2=\Delta$ will be called the super-Laplace operator.

It is possible to generalize the Euler-operator to superspace as follows:

\[
\mE = \sum_{j=1}^m x_j \pj+\sum_{j=1}^{2n} {x \grave{}}_{j} \pjb.
\]
One can also introduce the Euler-contractor as

\[
\begin{array}{l}
\mE \rfloor=\sum_{j=1}^m d x_j \pjcont + \sum_{j=1}^{2n} d {x \grave{}}_{j} \pjbcont.
\end{array}
\]
The Euler-operator measures the degree of a polynomial, whereas the Euler-contractor measures the degree of a form. This leads to the following decomposition of $\cD\cP_{m,2n}$:

\[
\cD\cP_{m,2n} = \bigoplus_{i,j=0}^{\infty} \Omega_i^j
\]
with

\[
\Omega_l^k=\left\{ \omega \in \cD\cP_{m,2n} \; | \; \mE \omega=k \omega \mbox{ and } \mE \rfloor \omega=l \omega \right\}.
\]
We will refer to an element of $\Omega_l^k$ as an $l$-form, homogeneous of degree $k$.

Furthermore we have a.o. the following relations (see \cite{DBS1})

\begin{eqnarray*}
\{x, \px\} &=& 2 \mE +M\\
\left[\Delta, x^2\right] &=& 2( 2\mE+M).
\end{eqnarray*}
Note that the second relation is familiar from the basic theory of harmonic analysis (see e.g. \cite{MR1151617}).
From these relations it follows that, putting $G_0 = <\Delta, x^2, 2\mE+M>$ and $G_1 = <x,\px>$, $G=G_0 \oplus G_1$ equipped with the standard graded commutator, is a Lie superalgebra of type $\mathfrak{osp}(1|2)$ (see e.g. \cite{MR0486011}).

Finally we introduce the super Gamma operator $\Gamma$ and the Laplace-Beltrami operator $\Delta_{LB}$ as

\begin{eqnarray*}
\Gamma &=& x \px - \mE\\
\Delta_{LB} &=& (M-2-\Gamma)\Gamma.
\end{eqnarray*}
One easily calculates that

\[
\Delta_{LB} =x^2 \Delta - \mE(M-2 + \mE)
\]
which is similar to the classical expression with $m$ replaced by $M$. 

We end this subsection by another important property which will be needed in the sequel, the so-called first basic identity (see \cite{MR1012510} and \cite{DBS1})
\[
\px = d \pxcont+\pxcont d.
\]

\subsection{The exterior derivative revisited}

We end this section by another characterization of the exterior derivative. This lemma will provide us with the necessary ideas for the following sections.

\begin{lemma}
The operator $d$ is the anti-commutator of the super-vector differential and the super-Dirac operator, i.e.

\[
d = \frac{1}{2} \{ dx, \px\}.
\]
\label{radialdefextdiff}
\end{lemma}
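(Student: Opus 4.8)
The plan is to read $dx$ as the operator of left multiplication by the element $dx \in \cD\cP_{m,2n}$, so that both sides of the asserted identity are endomorphisms of $\cD\cP_{m,2n}$ and $\{dx,\px\}=dx\,\px+\px\,dx$ is an ordinary operator anticommutator. First I would split each factor into its bosonic and fermionic halves, writing $dx = D+\grave D$ with $D=\sum_{i=1}^m dx_i e_i$ and $\grave D = \sum_{i=1}^{2n} d {x \grave{}}_{i}\,{e \grave{}}_{i}$, and $\px = \nabla + \grave\nabla$ with $\nabla = -\sum_{j=1}^m e_j \pj$ and $\grave\nabla = 2\sum_{j=1}^n(\fermdirac)$. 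The anticommutator then expands into four pieces $\{D,\nabla\}$, $\{D,\grave\nabla\}$, $\{\grave D,\nabla\}$, $\{\grave D,\grave\nabla\}$, and the task reduces to showing that the two mixed pieces vanish, that $\tfrac12\{D,\nabla\}=\sum_{i=1}^m dx_i \pI$, and that $\tfrac12\{\grave D,\grave\nabla\}=\sum_{i=1}^{2n} d {x \grave{}}_{i}\,\pIb$; adding these last two reproduces the coordinate expression for $d$.

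The key structural simplification is that the Clifford generators $e_i,{e \grave{}}_{i}$ commute with every coordinate, differential and differential operator, so in each product they may be collected on the left, leaving an ``analysis'' factor built only from the $dx_i, d {x \grave{}}_{i}$ and the $\pj,\pjb$. For the two mixed pieces this gives the vanishing at once: every resulting term carries a Clifford factor $e_i{e \grave{}}_{k}+{e \grave{}}_{k}e_i$, which is zero by the anticommutation of the orthogonal and symplectic generators. For the bosonic piece $\{D,\nabla\}$, the operator $\pj$ commutes with left multiplication by $dx_i$ (the calculus rule $\pj(dx_k F)=dx_k\pj F$), so the analysis factors of $D\nabla$ and $\nabla D$ coincide and one is left with the Clifford factor $e_ie_j+e_je_i=-2\delta_{ij}$; the two minus signs combine to give $2\sum_{i=1}^m dx_i\pI$, i.e. twice the bosonic part of $d$.

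The delicate computation, which I expect to be the main obstacle, is the fermionic piece $\{\grave D,\grave\nabla\}$, and the crux is the sign and normalization bookkeeping. Two features distinguish it from the bosonic case. First, $\pjb$ anticommutes with left multiplication by $d {x \grave{}}_{k}$ (the rule $\pjb(d {x \grave{}}_{k} F)=-d {x \grave{}}_{k}\,\pjb F$), so moving the analysis factors of $\grave\nabla\grave D$ into the order appearing in $\grave D\grave\nabla$ costs a minus sign absent bosonically. Second, no delta is produced by the differentiation itself; instead the deltas arise entirely from the symplectic relations ${e \grave{}}_{2j-1}{e \grave{}}_{2k}-{e \grave{}}_{2k}{e \grave{}}_{2j-1}=\delta_{jk}$ together with the vanishing of commutators of like-parity generators. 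I would therefore recombine $\grave D\grave\nabla$ and $\grave\nabla\grave D$ into a single sum whose coefficients are the commutators $[{e \grave{}}_{i},{e \grave{}}_{2j}]$ and $[{e \grave{}}_{i},{e \grave{}}_{2j-1}]$, evaluate these from the symplectic relations (they select $i=2j-1$ and $i=2j$ respectively, with opposite signs), and verify that the extra anticommutation sign and the overall factor $2$ conspire to yield exactly $2\sum_{i=1}^{2n} d {x \grave{}}_{i}\,\pIb$. Dividing the total by two then gives $\sum_{i=1}^m dx_i\pI+\sum_{i=1}^{2n} d {x \grave{}}_{i}\,\pIb=d$. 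A good consistency check on all signs and factors is that the very same mechanism (a symplectic commutator times the factor $2$) reproduces the fermionic contribution $-2n$ to $\px x = M$.
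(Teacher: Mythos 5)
Your proposal is correct and follows essentially the same route as the paper's own proof: splitting $dx$ and $\px$ into bosonic and fermionic parts, killing the two mixed anticommutators via the mixed Clifford relations, reducing the bosonic piece to $e_ie_j+e_je_i=-2\delta_{ij}$, and extracting the fermionic deltas from the symplectic commutators $[{e\grave{}}_{2j-1},{e\grave{}}_{2k}]=\delta_{jk}$ after the sign flip caused by $\pjb(d{x\grave{}}_k F)=-d{x\grave{}}_k\pjb F$. All signs and normalizations in your sketch check out (the mechanism indeed yields $2\sum_i dx_i\pI + 2\sum_i d{x\grave{}}_i\pIb$), and your consistency check against $\px x = M$ is sound.
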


\begin{proof}
We expand $\{ dx, \px\}$ as

\begin{eqnarray*}
\{ dx, \px\} &=& \{d\ux+d\uxb, \upxb-\upx\}\\
&=& \{ d\ux, \upxb\} - \{ d\ux, \upx\}+\{ d\uxb, \upxb\}-\{ d\uxb, \upx\}.
\end{eqnarray*}
The first and the last term equal zero. The second term yields the bosonic part of the exterior derivative

\begin{eqnarray*}
\{ d\ux, \upx\}&=& \left\{ \sum_i dx_i e_i, \sum_j \pj e_j  \right\}\\
&=& \sum_{i,j}dx_i \pj (e_i e_j + e_j e_i)\\
&=& - 2 \sum_{i}dx_i \pI.
\end{eqnarray*}
and the third term gives us the fermionic part

\begin{eqnarray*}
\{ d\uxb, \upxb\}&=& \left\{ \sum_i ( {e \grave{}}_{2i-1} d{x \grave{}}_{2i-1}+{e \grave{}}_{2i} d{x \grave{}}_{2i}), 2 \sum_{j} ( \fermdirac )  \right\}\\
&=& \left\{ \sum_i  {e \grave{}}_{2i-1} d{x \grave{}}_{2i-1} , 2 \sum_{j}  {e \grave{}}_{2j} \partial_{{x \grave{}}_{2j-1}}  \right\}- \left\{ \sum_i {e \grave{}}_{2i} d{x \grave{}}_{2i}  , 2 \sum_{j}  {e \grave{}}_{2j-1} \partial_{{x \grave{}}_{2j}}    \right\}\\ 
&=& 2\sum_{i,j} d{x \grave{}}_{2i-1} \partial_{{x \grave{}}_{2j-1}} ({e \grave{}}_{2i-1} {e \grave{}}_{2j} - {e \grave{}}_{2j} {e \grave{}}_{2i-1})\\
&&-2\sum_{i,j} d{x \grave{}}_{2i} \partial_{{x \grave{}}_{2j}} ({e \grave{}}_{2i} {e \grave{}}_{2j-1} - {e \grave{}}_{2j-1} {e \grave{}}_{2i})\\
&=& 2 \sum_{i} ( d{x \grave{}}_{2i-1} \partial_{{x \grave{}}_{2i-1}} + d{x \grave{}}_{2i} \partial_{{x \grave{}}_{2i}}).
\end{eqnarray*}
By adding all terms the proof is completed.
\end{proof}

\section{The Hodge coderivative}

In Lemma \ref{radialdefextdiff} we have proven that the exterior derivative can be written as the anti-commutator of $dx$ and $\px$. Therefore it would also be interesting to study other anti-commutators between the canonical objects $x,\px, \pxcont$ and $dx$.

Indeed, we can for example introduce a super-Hodge coderivative $d^*$ by

\begin{equation}
d^* = \frac{1}{2} \{ \pxcont, \px\}.
\label{hodgedef}
\end{equation}
After some calculations this leads to the following co-ordinate expression

\[
d^* = -\sum_{j=1}^m \pjcont \pj +2 \sum_{i=1}^n (\partial_{{x \grave{}}_{2i-1}} \rfloor \partial_{{x \grave{}}_{2i}} -\partial_{{x \grave{}}_{2i}} \rfloor \partial_{{x \grave{}}_{2i-1}}).
\]

We denote by $d^*_b$ and $d^*_f$ the respective bosonic and fermionic part of the operator $d^*$:

\[
\begin{array}{l}
d^*_b = -\sum_{j=1}^m \pjcont \pj \\
\vspace{-2mm}\\
d^*_f = 2 \sum_{i=1}^n (\partial_{{x \grave{}}_{2i-1}} \rfloor \partial_{{x \grave{}}_{2i}} -\partial_{{x \grave{}}_{2i}} \rfloor \partial_{{x \grave{}}_{2i-1}}).
\end{array}
\]

We state the most important properties of this operator in the following theorem. They clearly justify the name Hodge coderivative given to $d^*$.

\begin{theorem}
One has
\begin{enumerate}
\item $(d^*)^2=0$
\item $d d^* + d^* d = \Delta.$
\end{enumerate}
\end{theorem}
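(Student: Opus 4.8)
The plan is to establish the two statements by complementary means: the nilpotency $(d^*)^2=0$ by a direct computation in co-ordinates, exactly parallel to the verification that $d^2=0$, and the factorization $dd^*+d^*d=\Delta$ by a short algebraic argument resting on the first basic identity $\px=d\pxcont+\pxcont d$.

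For the first statement I would work from $d^*=d^*_b+d^*_f$ and check the three pieces $(d^*_b)^2$, $(d^*_f)^2$ and $d^*_bd^*_f+d^*_fd^*_b$ separately, each being governed by the (anti-)commutation table for $\pj,\pjb,\pjcont,\pjbcont$. In $(d^*_b)^2=\sum_{j,k}\pjcont\pj\pkcont\pk$ the derivative factors commute and the contraction factors anticommute, so after moving each $\pj$ past $\pkcont$ the summand is antisymmetric in $j,k$ and the sum vanishes. In $(d^*_f)^2$ the fermionic contractions commute while the fermionic derivatives anticommute; an index swap then shows that the ``diagonal'' terms are antisymmetric and the ``mixed'' terms cancel pairwise. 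Finally $d^*_b$ and $d^*_f$ anticommute, since a bosonic contraction anticommutes with a fermionic contraction while every other pair of factors commutes, so the cross term is zero as well.

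For the second statement I would first record that $\px$ commutes with $d$. Using $d=\frac12\{dx,\px\}$ one gets $[\px,d]=\frac12[\px^2,dx]=\frac12[\Delta,dx]$, and since $\Delta$ carries no Clifford generators and its derivative factors either commute or doubly anticommute with each differential occurring in $dx$, one has $[\Delta,dx]=0$. Granting this, I start from $\Delta=\px^2=\px(d\pxcont+\pxcont d)$, move the leading $\px$ through $d$ by $\px d=d\px$, and replace $\px\pxcont$ by $2d^*-\pxcont\px$ using the definition $2d^*=\{\pxcont,\px\}$. Collecting the four resulting terms and using $\px d=d\px$ together with the first basic identity once more in the form $d\pxcont+\pxcont d=\px$ produces $\Delta=2dd^*+2d^*d-\Delta$, which is the claim.

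The main obstacle is the sign accounting in the first statement rather than anything conceptual: because the four operators $\pj,\pjb,\pjcont,\pjbcont$ obey four distinct (anti-)commutation rules depending on whether a commuting or an anti-commuting variable is involved, one must track each transposition with care to confirm that every sum is genuinely antisymmetric and hence vanishes. For the factorization the only real input is the commutation $[\px,d]=0$; once it is available the computation closes in a few lines.
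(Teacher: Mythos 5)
Your proposal is correct and follows essentially the same route as the paper: nilpotency via the splitting $d^*=d^*_b+d^*_f$, with $(d^*_b)^2=0$, the cross terms cancelling by an overall sign $-1$, and $(d^*_f)^2=0$ by an index swap, and the factorization via the first basic identity $\px = d\,\pxcont+\pxcont\, d$ combined with $[\px,d]=0$. The only differences are cosmetic---you unfold $\Delta=\px^2$ directly instead of expanding $2dd^*$ and $2d^*d$ separately---plus a genuine small bonus: you actually justify $[\px,d]=0$ (via $[\px,d]=\frac{1}{2}[\Delta,dx]=0$), a fact the paper merely asserts.
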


\begin{proof}

We calculate

\[
d^{*2}= d^{*2}_b +d^*_b d^*_f + d^*_f d^*_b +d^{*2}_f.
\]
The first term is zero (this is the classical property of the Hodge operator). The second term cancels the third.
Finally, the calculation of $d^{*2}_f$ can be done as follows:

\begin{eqnarray*}
d^{*2}_f &=& 4  \sum_{i,j=1}^n (\partial_{{x \grave{}}_{2i-1}} \rfloor \partial_{{x \grave{}}_{2i}} -\partial_{{x \grave{}}_{2i}} \rfloor \partial_{{x \grave{}}_{2i-1}})  (\partial_{{x \grave{}}_{2j-1}} \rfloor \partial_{{x \grave{}}_{2j}} -\partial_{{x \grave{}}_{2j}} \rfloor \partial_{{x \grave{}}_{2j-1}})\\
&=&-4  \sum_{i,j=1}^n (\partial_{{x \grave{}}_{2j-1}} \rfloor \partial_{{x \grave{}}_{2j}} -\partial_{{x \grave{}}_{2j}} \rfloor \partial_{{x \grave{}}_{2j-1}}) (\partial_{{x \grave{}}_{2i-1}} \rfloor \partial_{{x \grave{}}_{2i}} -\partial_{{x \grave{}}_{2i}} \rfloor \partial_{{x \grave{}}_{2i-1}})\\
&=&0.
\end{eqnarray*}
which completes the proof of the first part.

Now we prove that $d d^* + d^* d = \Delta$. We use the definition and the first basic identity to obtain

\begin{eqnarray*}
2 d d^* &=& d \pxcont \px + d \px \pxcont\\
&=& \px^2 - \pxcont d \px + d \px \pxcont\\
2  d^* d &=& \pxcont \px d + \px \pxcont d\\
&=& \pxcont \px d + \px^2 - \px d \pxcont.
\end{eqnarray*}
Now it suffices to note that $[\px,d] = 0$ to complete the proof.
\end{proof}

\begin{remark}
Although we are able to define the operator $d^*$, we cannot define a super analogue of the Hodge star map. Classically this is a map from the space of $k$-forms to the space of $(n-k)$-forms, with $n$ the dimension of the manifold under consideration. This is no longer possible since the de Rham complex is now infinite.
\end{remark}

\section{The Poincar\'e and the Dual Poincar\'e Lemma}

From the previous sections it readily follows that $d,d^*$ and $\Delta$ are maps between the following spaces

\[
\begin{array}{ll}
d: & \Omega_k^l \rightarrow \Omega_{k+1}^{l-1}\\
d^*: & \Omega_k^l \rightarrow \Omega_{k-1}^{l-1}\\
\Delta: & \Omega_k^l \rightarrow \Omega_{k}^{l-2}.\\
\end{array}
\]

We can visualize these operators (see figures \ref{triangle1} and \ref{triangle2}).
In these diagrams the operator $d$ is a horizontal arrow, $d^*$ is a vertical arrow and $\Delta$ is a diagonal arrow (to simplify the figures we have only included one). Also note that these diagrams are not commutative. 

It is also worthwhile to take a look at the bosonic and fermionic limit. In the bosonic limit (e.g. $m=4,n=0$) figure \ref{triangle1} reduces to figure \ref{boslim}, because there is a bound on the subindex of $\Omega_k^l$. In the fermionic limit (e.g. $m=0,n=1$) figure \ref{triangle1} reduces to figure \ref{ferlim}, since in this case there is a bound on the upper index of $\Omega_k^l$. 

We can now formulate super versions of the classical lemmata of Poincar\'e in $\cD\cP_{m,2n}$. The inspiration for the proof stems from \cite{MR2186030} and from the abstract notions on Clifford analysis in \cite{MR1472163}.

First we define operators $S$ and $T$ by

\begin{eqnarray*}
S &=&\frac{1}{2} \{x,\pxcont\} \\
T &=&\frac{1}{2} \{x , dx\}.\\
\end{eqnarray*}

They have the following co-ordinate expression, as can be seen after some calculations

\begin{eqnarray*}
S &=& \sum_{j=1}^m x_j \pjcont  +\sum_{j=1}^{2n} {x \grave{}}_{j} \pjbcont\\
T &=& - \sum_{j=1}^m x_j d x_j  +\frac{1}{2} \sum_{j=1}^{n} ({x \grave{}}_{2j-1} d{x \grave{}}_{2j}-{x \grave{}}_{2j} d{x \grave{}}_{2j-1})
\end{eqnarray*}
and they both square to zero. Moreover, we have that

\[
\begin{array}{ll}
S: & \Omega_k^l \rightarrow \Omega_{k-1}^{l+1}\\
T: & \Omega_k^l \rightarrow \Omega_{k+1}^{l+1},\\
\end{array}
\]
so $S$ is a suitable candidate for a homotopy operator for $d$, and so is $T$ for $d^*$. Indeed we have the following technical lemma:

\begin{lemma}
One has
\begin{eqnarray*}
d S + S d &=& \mE+\mE \rfloor\\
d^* T + T d^* &=& M + \mE -\mE \rfloor
\end{eqnarray*}
where $M=m-2n$.
\label{homrels}
\end{lemma}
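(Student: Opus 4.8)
The plan is to regard the four objects $x$, $dx$, $\px$ and $\pxcont$ (as multiplication, respectively differentiation/contraction operators on $\cD\cP_{m,2n}$) as the fundamental generators, to recall that $S=\frac12\{x,\pxcont\}$, $T=\frac12\{x,dx\}$, $d=\frac12\{dx,\px\}$ and $d^*=\frac12\{\pxcont,\px\}$ are precisely their half-(anti)commutators, and then to evaluate the two bidegree-preserving operators $dS+Sd$ and $d^*T+Td^*$ by expanding each and collapsing everything through a short list of quadratic relations. Besides the first basic identity $\px=d\pxcont+\pxcont d$ and the relation $\{x,\px\}=2\mE+M$ already at our disposal, I will use three auxiliary relations: the Leibniz rules $[d,x]=dx$ and $[d^*,x]=\pxcont$ (here $dx$ denotes multiplication by the super-vector differential, which is literally $d$ applied to the super-vector variable), the contractor identity $[dx,\pxcont]=2\mE \rfloor -M$ dual to $\{x,\px\}=2\mE+M$, and the dual basic identity $\px=d^*dx+dx\,d^*$.

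For the first identity I would write $2(dS+Sd)=d\{x,\pxcont\}+\{x,\pxcont\}d$ and expand it into four terms. In the two terms where $d$ neighbours $\pxcont$ I substitute $d\pxcont=\px-\pxcont d$ and $\pxcont d=\px-d\pxcont$ from the first basic identity; this produces the combination $x\px+\px x=\{x,\px\}=2\mE+M$ together with four leftover terms. In the two terms where $d$ neighbours $x$ I substitute $dx=[d,x]$; after this the spurious $x\,d\,\pxcont$ contributions cancel in pairs and the whole remainder collapses to the single commutator $[dx,\pxcont]$. By the contractor identity this equals $2\mE \rfloor -M$, so that $2(dS+Sd)=(2\mE+M)+(2\mE \rfloor -M)=2(\mE+\mE \rfloor)$, which is the first claim.

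The second identity I would treat by the exact mirror argument, expanding $2(d^*T+Td^*)=d^*\{x,dx\}+\{x,dx\}d^*$ and collapsing the blocks where $d^*$ neighbours $dx$ through the dual basic identity $\px=d^*dx+dx\,d^*$. The relation $\{x,\px\}=2\mE+M$ reappears, the Leibniz rule $[d^*,x]=\pxcont$ cancels the mixed $x\,d^*\,dx$ terms, and the remainder reduces to $[\pxcont,dx]=-[dx,\pxcont]=M-2\mE \rfloor$. Summing gives $2(d^*T+Td^*)=(2\mE+M)+(M-2\mE \rfloor)=2(M+\mE-\mE \rfloor)$, the second claim. Both computations are purely formal once the quadratic relations are in hand, and since the left-hand sides are manifestly maps $\Omega_k^l\rightarrow\Omega_k^l$ they can only equal a diagonal operator, consistent with the right-hand sides.

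The main obstacle is thus not the cancellations but the verification of the auxiliary relations, in particular the dual basic identity $\px=d^*dx+dx\,d^*$ and the contractor commutator $[dx,\pxcont]=2\mE \rfloor -M$. These involve the orthogonal Clifford generators $e_j$, whose bivector products $e_je_k$ spoil the naive analogy with $\{x,\px\}$: indeed the \emph{anti}-commutator $\{dx,\pxcont\}$ is not diagonal, whereas the commutator is (one checks this already on a two-form in the purely bosonic case). I expect to establish these relations by splitting each operator into its bosonic and fermionic parts and testing on the generators $x_i,dx_i,{x \grave{}}_i,d{x \grave{}}_i$ with the elementary (anti)commutation rules of $\pj,\pjb,\pjcont,\pjbcont$ recorded above; the Leibniz rules $[d,x]=dx$ and $[d^*,x]=\pxcont$ follow directly from $d=\sum_i dx_i\pI+\sum_i d{x \grave{}}_i\pIb$ together with the product rules for the contractions. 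Alternatively one may hope to derive the dual basic identity abstractly from the first basic identity and $[\px,d]=0$, which would keep the entire proof inside the radial-algebra formalism.
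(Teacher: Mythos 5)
Your proposal is correct, and it takes a genuinely different route from the paper. The paper proves the lemma (only the second identity, with the first left as analogous) by brute-force coordinate computation: it splits $d^*$ and $T$ into bosonic and fermionic parts, expands the four products $d^*_bT_b$, $d^*_bT_f$, $d^*_fT_b$, $d^*_fT_f$ term by term via the calculus rules for $\pjcont$, $\pjbcont$, $\pj$, $\pjb$, and collects everything into $m-2n+\mE-\mE\rfloor$. You instead stay at the operator level: expanding $2(dS+Sd)=d\{x,\pxcont\}+\{x,\pxcont\}d$, collapsing the $d\pxcont$ and $\pxcont d$ blocks via the first basic identity to produce $\{x,\px\}=2\mE+M$, and recognizing the remainder as $[d,x]\,\pxcont-\pxcont\,[d,x]=[dx,\pxcont]$, which your contractor identity evaluates as $2\mE\rfloor-M$; the mirror computation with the dual basic identity $\px=d^*\,dx+dx\,d^*$ and $[d^*,x]=\pxcont$ handles the second claim. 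I have checked that all four of your auxiliary relations are true (the Clifford cross terms $e_j$, ${e\grave{}}_k$ drop out of $[d^*,x]$ and $[dx,\pxcont]$ exactly as needed), that the claimed cancellations occur, and that the signs work out, e.g.\ $[\pxcont_f,dx_f]=-2n-2\mE_f\rfloor$ combines with $[\pxcont_b,dx_b]=m-2\mE_b\rfloor$ to give $[dx,\pxcont]=2\mE\rfloor-M$. Your observation that the \emph{anti}-commutator $\{dx,\pxcont\}$ is not diagonal (it carries bivector terms $e_je_k$) is correct and is precisely why the naive mirror of $\{x,\px\}=2\mE+M$ fails; this is the subtle point of the approach. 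What your route buys is symmetry between the two identities and reusable structural facts --- the dual basic identity in particular has independent interest within the radial-algebra formalism --- whereas the paper's route is self-contained; note, though, that your auxiliary relations still require bosonic/fermionic coordinate verifications of roughly the same total length as the paper's direct computation, so the coordinate work is displaced rather than eliminated. One caution: your closing hope of deriving $\px=d^*\,dx+dx\,d^*$ purely abstractly from the first basic identity and $[\px,d]=0$ looks doubtful, since the needed intermediate anticommutators are exactly the non-diagonal ones; but as you make the coordinate verification your primary route, this is not a gap.
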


\begin{proof}
We only give the proof for the second identity.
We calculate $d^* T$

\[
d^* T = d^*_b T_b + d^*_b T_f + d^*_f T_b + d^*_f T_f
\]
with

\begin{eqnarray*}
T_b&=& - \sum_{j=1}^m x_j d x_j\\
T_f&=& \frac{1}{2} \sum_{j=1}^{n} ({x \grave{}}_{2j-1} d{x \grave{}}_{2j}-{x \grave{}}_{2j} d{x \grave{}}_{2j-1}).
\end{eqnarray*}
We calculate the first term

\begin{eqnarray*}
d^*_b T_b &=& \sum_{i,j} \pIcont \pI x_j d x_j\\
&=& \sum_{i,j} \pIcont  d x_j (\delta_{ij} + x_j\pI )\\
&=& \sum_{i,j} (\delta_{ij}- dx_j \pIcont) (\delta_{ij} + x_j\pI )\\
&=& m - \sum_i dx_i \pIcont + \sum_i x_i \pI - \sum_{i,j} x_j d x_j \pIcont \pI \\
&=&m- \mE_b \rfloor + \mE_b - T_b d_b^*.
\end{eqnarray*}
The second and third term give
\begin{eqnarray*}
d^*_b T_f &=&-T_f d_b^*\\
d^*_f T_b &=& - T_b d^*_f.
\end{eqnarray*}
Finally we calculate the last term

\begin{eqnarray*}
d^*_f T_f &=& \sum_{i,j} (\partial_{{x \grave{}}_{2i-1}} \rfloor \partial_{{x \grave{}}_{2i}}-\partial_{{x \grave{}}_{2i}} \rfloor \partial_{{x \grave{}}_{2i-1}})( {x \grave{}}_{2j-1} d {x \grave{}}_{2j} -{x \grave{}}_{2j} d {x \grave{}}_{2j-1} )\\
&=& \sum_{i,j} \partial_{{x \grave{}}_{2i-1}} \rfloor \partial_{{x \grave{}}_{2i}}  {x \grave{}}_{2j-1} d {x \grave{}}_{2j} \quad (a)\\
&&- \sum_{i,j} \partial_{{x \grave{}}_{2i-1}} \rfloor \partial_{{x \grave{}}_{2i}}  {x \grave{}}_{2j} d {x \grave{}}_{2j-1} \quad (b)\\
&&- \sum_{i,j} \partial_{{x \grave{}}_{2i}} \rfloor \partial_{{x \grave{}}_{2i-1}}  {x \grave{}}_{2j-1} d {x \grave{}}_{2j} \quad (c)\\
&&+ \sum_{i,j} \partial_{{x \grave{}}_{2i}} \rfloor \partial_{{x \grave{}}_{2i-1}}  {x \grave{}}_{2j} d {x \grave{}}_{2j-1} \quad (d).\\
\end{eqnarray*}
The terms $(a)$ and $(d)$ yield

\begin{eqnarray*}
(a)&=&-\sum_{i,j}  {x \grave{}}_{2j-1} d {x \grave{}}_{2j} \partial_{{x \grave{}}_{2i-1}} \rfloor \partial_{{x \grave{}}_{2i}} \\
(d)&=& -\sum_{i,j}  {x \grave{}}_{2j} d {x \grave{}}_{2j-1} \partial_{{x \grave{}}_{2i}} \rfloor \partial_{{x \grave{}}_{2i-1}}.\\
\end{eqnarray*}
We calculate $(b)$ as

\begin{eqnarray*}
(b) &=&  - \sum_{i,j} \partial_{{x \grave{}}_{2i-1}} \rfloor d {x \grave{}}_{2j-1} (\delta_{ij}-   {x \grave{}}_{2j} \partial_{{x \grave{}}_{2i}} )\\
&=&-  \sum_{i,j} (\delta_{ij} + d {x \grave{}}_{2j-1} \partial_{{x \grave{}}_{2i-1}}  \rfloor) (\delta_{ij}-   {x \grave{}}_{2j}  \partial_{{x \grave{}}_{2i}} )\\
&=& - n - \sum_i d {x \grave{}}_{2i-1} \partial_{{x \grave{}}_{2i-1}} \rfloor +\sum_i {x \grave{}}_{2i} \partial_{{x \grave{}}_{2i}} + \sum_{i,j}  {x \grave{}}_{2j} d {x \grave{}}_{2j-1} \partial_{{x \grave{}}_{2i-1}} \rfloor \partial_{{x \grave{}}_{2i}}.
\end{eqnarray*}
Similarly, (c) yields

\begin{eqnarray*}
(c) &=&- n - \sum_i d {x \grave{}}_{2i} \partial_{{x \grave{}}_{2i}} \rfloor +\sum_i {x \grave{}}_{2i-1} \partial_{{x \grave{}}_{2i-1}} + \sum_{i,j}  {x \grave{}}_{2j-1} d {x \grave{}}_{2j} \partial_{{x \grave{}}_{2i}} \rfloor \partial_{{x \grave{}}_{2i-1}}.
\end{eqnarray*}
Collecting all terms we find that

\[
d_f^* T_f = -2n - \mE_f \rfloor + \mE_f - T_f d_f^*.
\]
Adding all previous terms concludes the proof of the lemma.
\end{proof}

Note once again the natural appearance of the super-dimension $M$ in the second formula of the previous lemma.

Using these results, we can immediately prove the lemma of Poincar\'e (which can e.g. also be found in \cite{MR1175751} and \cite{MR0580292} without using Clifford numbers):

\begin{lemma}[Poincar\'e]
In figures (\ref{triangle1}) and (\ref{triangle2}) every row is exact, except for the first row in figure (\ref{triangle1}),
\[
\xymatrix{
0\ar[r] &\Omega^{0}_0\ar[r]& 0.\\
}
\]
\end{lemma}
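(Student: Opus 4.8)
The plan is to recognise each horizontal row of the diagrams as a finite cochain complex for $d$ and to use $S$ as a contracting homotopy, exactly as in the classical proof of the Poincar\'e lemma. Since $d:\Omega_k^l\to\Omega_{k+1}^{l-1}$ leaves the total degree $k+l$ invariant, a single row consists precisely of the spaces $\Omega_k^l$ with $k+l=N$ fixed; because both the form degree and the polynomial degree are non-negative, this row is the \emph{finite} complex
\[
0\longrightarrow\Omega_0^N\xrightarrow{\;d\;}\Omega_1^{N-1}\xrightarrow{\;d\;}\cdots\xrightarrow{\;d\;}\Omega_N^0\longrightarrow 0,
\]
along which the operator $S:\Omega_k^l\to\Omega_{k-1}^{l+1}$ runs backwards (it, too, preserves $N$).

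First I would invoke Lemma \ref{homrels}, which gives $dS+Sd=\mE+\mE\rfloor$, and note that on $\Omega_k^l$ the operator $\mE+\mE\rfloor$ acts as multiplication by the total degree $k+l=N$. Hence on the whole row of degree $N$ one has the single operator identity $dS+Sd=N\,\mathrm{Id}$. For $N>0$ this forces exactness at once: if $\omega$ lies in the row and $d\omega=0$, then $\omega=d\bigl(\tfrac1N S\omega\bigr)$, so every closed element is exact and $\tfrac1N S$ is a genuine contracting homotopy.

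Next I would dispose of the two end-points, which the homotopy handles automatically once one observes that $S$ and $d$ land in zero spaces there. At the left end $\Omega_0^N$ one has $S\omega\in\Omega_{-1}^{N+1}=0$, so a closed $\omega$ obeys $N\omega=dS\omega+Sd\omega=0$ and therefore $\omega=0$, i.e.\ $d$ is injective; at the right end $\Omega_N^0$ one has $d\omega\in\Omega_{N+1}^{-1}=0$, so $\omega=d\bigl(\tfrac1N S\omega\bigr)$ and $d$ is surjective. This settles every row with $N>0$, hence all rows of figure (\ref{triangle2}) and every row of figure (\ref{triangle1}) except the first.

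Finally, the case $N=0$ is the exceptional row $0\to\Omega_0^0\to0$: here $\mE+\mE\rfloor$ is the zero operator, the identity $dS+Sd=0$ carries no information, and $\Omega_0^0$ (the constants) survives as non-trivial cohomology. I expect the only real subtleties to be the bookkeeping at the two ends of each finite complex and the isolation of $N=0$ as the unique row on which $\mE+\mE\rfloor$ fails to be invertible; once the operator identity of Lemma \ref{homrels} is in hand, all the remaining content of the statement is forced.
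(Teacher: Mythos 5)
Your proposal is correct and takes essentially the same route as the paper: both rest on Lemma \ref{homrels}, observing that $dS+Sd=\mE+\mE\rfloor$ acts as multiplication by the total degree $k+l$ on $\Omega_l^k$, and then divide by $k+l$ whenever it is nonzero, leaving $\Omega_0^0$ as the sole exceptional row. Your explicit bookkeeping at the two ends of each finite row merely spells out what the paper's one-line argument $\omega=\frac{1}{k+l}\,d(S\omega)$ leaves implicit.
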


\begin{proof}
Take $\omega \in \Omega^k_{l}$ with $d \omega = 0$. Then, by the previous lemma,

\[
(k+l) \omega = S d \omega  +d S \omega.\\
\]
or

\[
\omega = \frac{1}{k+l} d (S \omega),
\]
which works for all $\Omega_{l}^{k}$, except for $\Omega_0^0$.
\end{proof}

We can also state the dual of the previous lemma, which cannot be found in the literature so far. Note that the proof is surprisingly more difficult than classically.

\begin{lemma}[Dual Poincar\'e]
In figures (\ref{triangle1}) and (\ref{triangle2}) every column is exact, except at $\Omega_{m}^{2n}$, where the homology-space is one-dimensional and a representative is given by 
\[
{x \grave{}}_{1} {x \grave{}}_{2} \ldots {x \grave{}}_{2n} dx_1 dx_2 \ldots dx_m.
\]
\end{lemma}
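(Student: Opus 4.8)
The plan is to handle the columns in two regimes according to whether the scalar $M + \mE - \mE\rfloor$ of Lemma \ref{homrels} vanishes, exactly mirroring the proof of the ordinary Poincar\'e lemma but with $T$ playing the role of $S$. On a column $\Omega_k^l$ with $k - l \neq M$ the operator $M + \mE - \mE\rfloor$ acts as the nonzero scalar $M + l - k$, so for any $d^*$-closed $\omega$ one gets
\[
\omega = \frac{1}{M+l-k}\left( d^* T + T d^* \right)\omega = d^*\!\left( \tfrac{1}{M+l-k}\, T\omega \right),
\]
which shows that every off-critical column is exact. This is the routine half of the argument.

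All the content sits on the single critical column $k - l = M = m - 2n$, where this homotopy degenerates and must be replaced by something finer; this is the step I expect to be the main obstacle. The classical (purely bosonic) situation is deceptively simple here: when $n=0$ the condition $k-l=m$ together with $k\le m$, $l\ge 0$ forces $k=m$, $l=0$, so the critical column is a single node and there is nothing to prove beyond exhibiting the top form $dx_1\cdots dx_m$. In the genuinely super case the critical column is an infinite ladder of nodes $\Omega_k^{k-M}$, and one must show that the homology cancels at every node except $\Omega_m^{2n}$; note in particular that $\Omega_{m+1}^{2n+1}\neq 0$, so the fact that the proposed representative is not a $d^*$-boundary is not automatic.

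To organize this I would exploit the splitting $d^* = d^*_b + d^*_f$ into its bosonic and fermionic parts. As already recorded in the proof of Theorem 1, these satisfy $(d^*_b)^2 = (d^*_f)^2 = 0$ and $d^*_b d^*_f + d^*_f d^*_b = 0$, so $d^*$ is the total differential of a bicomplex: $\cD\cP_{m,2n}$ factorizes as the graded tensor product of the bosonic form algebra in $x_i, dx_i$ and the fermionic form algebra in ${x \grave{}}_i, d{x \grave{}}_i$, with $d^*_b$ acting on the first factor and $d^*_f$ on the second. The two factor homotopies are exactly the two lines computed inside the proof of Lemma \ref{homrels}:
\[
d^*_b T_b + T_b d^*_b = m + \mE_b - \mE_b\rfloor, \qquad d^*_f T_f + T_f d^*_f = -2n + \mE_f - \mE_f\rfloor .
\]
The first scalar vanishes only at bosonic bidegree (form degree $m$, polynomial degree $0$), forcing $H(d^*_b)$ to be one-dimensional with representative $dx_1\cdots dx_m$; the second vanishes only at fermionic bidegree (form degree $0$, polynomial degree $2n$), forcing $H(d^*_f)$ to be one-dimensional with representative ${x \grave{}}_1\cdots{x \grave{}}_{2n}$ (here one uses the boundedness $k_b \le m$ and $l_f \le 2n$ to pin down the unique surviving node in each factor).

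Finally, since we work over a field, the K\"unneth theorem gives $H(d^*) \cong H(d^*_b) \otimes H(d^*_f)$, which is one-dimensional and concentrated at total bidegree (form degree $m$, polynomial degree $2n$), i.e.\ at $\Omega_m^{2n}$, with representative the product ${x \grave{}}_1\cdots{x \grave{}}_{2n}\, dx_1\cdots dx_m$. Translating back to columns, the homology vanishes at every node except $\Omega_m^{2n}$, so every column is exact apart from the critical one, whose homology is the claimed one-dimensional space. A direct check confirms the representative is a cycle ($d^*_b$ finds no $x_i$ to differentiate and $d^*_f$ no $d{x \grave{}}_i$ to contract), and its non-exactness is precisely what the K\"unneth computation supplies. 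The crux, as indicated, is the critical column; the bosonic--fermionic bicomplex decomposition is what reduces it to two essentially classical one-node computations and explains the cancellation along the ladder.
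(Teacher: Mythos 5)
Your proposal is correct, but it reaches the critical column by a genuinely different route than the paper. The off-critical step (inverting the scalar $M+\mE-\mE\rfloor$ of Lemma \ref{homrels} on every column with $\mE\rfloor-\mE\neq M$) is identical to the paper's. For the critical column, however, the paper argues by hand: it expands a $d^*$-closed $\omega$ in powers of $x_1$ and $dx_1$, solves the resulting system of equations to build an explicit primitive $\alpha$ with $d^*\alpha=\omega-dx_1\,\omega_1^0$, iterates over $x_2,dx_2,\ldots$, and closes with a three-way case analysis ($l<m$, $l>m$, $l=m$) whose last case isolates ${x \grave{}}_{1}{x \grave{}}_{2}\cdots{x \grave{}}_{2n}\,dx_1\cdots dx_m$. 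You instead treat $d^*=d^*_b+d^*_f$ as the total differential of a tensor product of complexes and invoke K\"unneth; the inputs you need --- $(d^*_b)^2=(d^*_f)^2=0$, $d^*_bd^*_f+d^*_fd^*_b=0$, and the factor homotopies $d^*_bT_b+T_bd^*_b=m+\mE_b-\mE_b\rfloor$ and $d^*_fT_f+T_fd^*_f=-2n+\mE_f-\mE_f\rfloor$ --- are all already computed verbatim inside the paper's proofs of Theorem 1 and Lemma \ref{homrels}, so no new calculation is required. Your route is shorter, avoids the induction entirely, and upgrades the paper's closing Remark (2) --- that the homology sits at the crossing point of the bosonic and fermionic limit diagrams --- from an a posteriori picture into the actual mechanism of the proof; what the paper's route buys is explicitness (concrete $d^*$-primitives at every stage) and freedom from homological machinery, in keeping with its stated aim of elementary accessibility. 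Two bookkeeping points you should spell out, though neither is a gap: (i) since $\pjb$ commutes with the $dx_k$ while $\pjbcont$ anticommutes with them, $d^*_f$ carries the Koszul sign of the bosonic form degree, which is exactly what makes $d^*_b+d^*_f$ the standard total differential of the tensor-product complex, so the field-coefficient K\"unneth theorem applies without Tor terms (and the Clifford coefficients tensor through harmlessly); (ii) at each factor's critical node the homotopy says nothing, and the one-dimensionality of the factor homology rests on the direct check that the node itself is one-dimensional, consists of cycles, and receives no boundaries because the adjacent space vanishes (bosonic form degree $m+1$, respectively fermionic polynomial degree $2n+1$, is impossible) --- your parenthetical about boundedness is precisely this check, and it is where the non-exactness of the representative, correctly flagged by you as non-automatic since $\Omega_{m+1}^{2n+1}\neq 0$, is actually secured.
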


\begin{proof}

We have to consider two different cases. Take $\omega \in \Omega^k_l$ with $d^* \omega=0$. 
First suppose that $m-2n + k-l \neq 0$. Using lemma \ref{homrels} we have that

\begin{eqnarray*}
(d^* T + T d^*) \omega & =&  (m-2n + \mE -\mE \rfloor) \omega\\
&=& (m-2n + k -l) \omega
\end{eqnarray*}
or

\[
\omega = \frac{1}{k+m-2n-l} d^* (T \omega).
\]

So it remains to examine the case where $k+m-2n-l = 0$. This corresponds with exactly one column in either figure \ref{triangle1} or \ref{triangle2}, depending on the value of $m$. First note that the theorem is true in the bosonic and fermionic limit (in both cases the respective column contains only one non-zero space).

Next suppose that $m,n \neq 0$. We use an iterative argument. Take $\omega \in \Omega^k_l$ with $k+m-2n-l = 0$ and $d^* \omega=0$. Then we can write $\omega$ as

\[
\omega = \sum_{i=0}^k x_1^i \omega_0^i + dx_1 \sum_{i=0}^k x_1^i \omega_1^i
\]
with $x_1,dx_1 \not \in \omega_0^i, \omega_1^i$.
Expressing the fact that $d^* \omega=0$ leads to the following equations:

\[
\begin{array}{ll}
d^* \omega_0^0 = \omega_1^1&d^*\omega_1^1=0\\
d^* \omega_0^1 = 2 \omega_1^2&d^*\omega_1^2=0\\
\vdots&\vdots\\
d^* \omega_0^i = (i+1) \omega_1^{i+1}&d^*\omega_1^{i+1}=0\\
\vdots&\vdots\\
d^* \omega_0^{k-1} = k \omega_1^{k}&d^*\omega_1^k=0\\
d^* \omega_0^{k} =0&\\
d^* \omega_1^{0} =0.&\\
\end{array}
\]

As $\omega_0^{k} \in \Omega_l^0(m-1,2n)$ (we replace $m$ by $m-1$ since $x_1$ does not appear in $\omega_0^{k}$) we have that $m-1-2n-l\neq0$, so by the first part of the proof there exists a $\gamma$ with $d^* \gamma =\omega_0^{k}$. Putting

\[
\alpha = - \sum_{i=1}^k dx_1 \frac{x_1^i}{i} \omega_0^{i-1} + x_1^k \gamma
\]
we obtain that

\[
d^* \alpha = \omega - dx_1 \omega_1^0.
\]

So we still have to examine the term $dx_1 \omega_1^0$. We have that $\omega_1^{0} \in \Omega_{l-1}^k(m-1,2n)$ with $d^*\omega_1^{0} =0 $ and $k+m-1-2n-(l-1) = 0$.
Now we can repeat the previous argument by splitting off the next commuting variable $x_2, dx_2$ in $\omega_1^{0}$.
This procedure can be repeated. We distinguish three different cases:

\begin{itemize}
\item The case where $l<m$

After $l$ steps the following term remains
\[
dx_1 \ldots dx_l \alpha,
\]
where $\alpha \in \Omega_0^k(m-l,2n)$ can be written as

\[
\alpha = \sum_{i=0}^k x_{l+1}^i \alpha_i, \quad x_{l+1} \not \in \alpha_i, \alpha_i \in \Omega_{0}^{k-i}.
\]

So if we put 

\[
\beta = - \sum_{i=0}^k dx_{l+1} \frac{x_{l+1}^{i+1}}{i+1} \alpha_i
\]

then

\[
d^* \beta = \alpha
\]

and 

\[
d^*( (-1)^l dx_1 \ldots dx_l \beta )= dx_1 \ldots dx_l \alpha.
\]

\item The case where $l>m$

In this case the remaining term vanishes. Indeed, if this is not the case then, after $m$ steps, we would end up with a term of the form

\[
dx_1 \ldots dx_m \alpha,\quad \alpha \in \Omega_{l-m}^k(0,2n).
\]

implying that $k > 2n$ which is not possible.

\item The case where $l=m$

This implies that $k=2n$. After $m$ steps the following term remains
\[
dx_1 \ldots dx_m \alpha,\quad \alpha \in \Omega_0^{2n}(0,2n).
\]

so $\alpha ={x \grave{}}_{1} {x \grave{}}_{2} \ldots {x \grave{}}_{2n} $ and we have to examine

\[
\delta = {x \grave{}}_{1} {x \grave{}}_{2} \ldots {x \grave{}}_{2n}  dx_1 \ldots dx_m \in \Omega^{2n}_{m}.
\]

It is now easy to see that $d^* \delta = 0$ and that there does not exist an $\epsilon$ such that $d^* \epsilon = \delta$. This completes the proof of the lemma.
\end{itemize}
\end{proof}

\begin{remark} 

(1) The homotopy operators $S$ and $T$ used in the proofs of the above theorems are defined in a very natural way using Clifford analysis, by considering the right anti-commutators. The only problem that occurs in the last proof is that $d^* T+ T d^*$ can become zero.

(2) The results of the last theorem can be given a nice graphical interpretation. If we consider a superspace of dimension $(m,2n)$, then we can consider both the bosonic $(m,0)$ and fermionic $(0,2n)$ limit. If we put both diagrams on top of each other, we notice that the crossing point is the space $\Omega_{m}^{2n}$, which is exactly the space where the homology does not vanish.
\end{remark}

\section{Connection with the Berezin integral: a super volume-form}

In classical analysis on $\mR^m$, the volume-form $dx_1 \wedge dx_2\wedge \ldots \wedge dx_m$ is the unique form that is co-closed but not co-exact (in a simply-connected domain). In superspace we find similarly the form ${x \grave{}}_{1} {x \grave{}}_{2} \ldots {x \grave{}}_{2n}  dx_1 \ldots dx_m$. This clearly establishes a connection with the Berezin-integral (see \cite{MR732126}). There the commuting variables are integrated in the classical way, but the anti-commuting ones all give zero except for the term in ${x \grave{}}_{1} {x \grave{}}_{2} \ldots {x \grave{}}_{2n}$.

To clarify this, consider a super-function $f$ in the sense of Berezin defined in a domain $\Omega \in \mR^m$. This function $f$ can be written uniquely in the following form

\begin{equation}
f(x, {x \grave{}}) = \sum_{\nu=(\nu_1,\ldots,\nu_{2n})} f_{\nu}(x) {x \grave{}}_{1}^{\nu_1} \ldots {x \grave{}}_{2n}^{\nu_{2n}}
\label{expansion}
\end{equation}
where $\nu_i = 0$ or $1$ and $f_{\nu}(x)$ is a smooth function of the (real) co-ordinates $(x_1,\ldots,x_m)$. The Berezin-integral of this function is defined as follows

\begin{eqnarray*}
\int_B f(x, {x \grave{}}) &=& \int_{B}  f_{(1,1,\ldots,1)}(x) {x \grave{}}_{1} \ldots {x \grave{}}_{2n}\\
&=& \int_{\Omega} f_{(1,1,\ldots,1)}(x) dx_1 dx_2 \ldots dx_m
\end{eqnarray*}
which is sometimes multiplied by a constant factor. We see that only the term of $f$ in ${x \grave{}}_{1} \ldots {x \grave{}}_{2n}$ gives a contribution to the integral.

This is completely in correspondance with the results from the previous section. Indeed, we have obtained that $\omega = {x \grave{}}_{1} {x \grave{}}_{2} \ldots {x \grave{}}_{2n}  dx_1 \ldots dx_m$is the unique form that is co-closed and not co-exact. So it is a suitable candidate for a volume-form. Moreover if we multiply $\omega$ with a function $f$ of the form (\ref{expansion}), we find that the only types of objects that one can integrate are of the following form $g(x) \omega = g(x){x \grave{}}_{1} {x \grave{}}_{2} \ldots {x \grave{}}_{2n}  dx_1 \ldots dx_m$ with $g(x)$ a smooth function of the real co-ordinates $x_i$. Also the integration-recipe is now rather obvious, if one takes into account that in the Berezin-Leites approach to superspace the $x_i$-variables take real values and the ${x \grave{}}_i$-variables are only abstract objects.

\section{Conclusions and further research}

In this paper we have given a framework for constructing Clifford analysis on superspace. More specifically, the introduction of orthogonal and symplectic Clifford generators allows one to construct a super-Dirac operator, super vector-variables, and the likes. This also leads in a natural way to the introduction of the so-called superdimension.

Furthermore we have considered anti-commutators between these canonical objects, which has enabled us to introduce a Hodge operator on superspace and to construct the suitable homotopy-operators. We have expanded the classical polynomial de Rham complex and obtained some enlightening graphical interpretations. Although the proof of the classical Poincar\'e lemma stays virtually identical, the proof of its dual is a lot more complicated. However, this gives some interesting information concerning the Berezin-integral.

In subsequent papers we will further develop this approach to superspace. We envisage an extension of the Fischer-decomposition to the super case, a detailed study of super spherical monogenics and harmonics, a construction of new types of special functions and possible physical applications.

\begin{figure}[h]
\[
\xymatrix@=15pt{&&&&0&&&&\\
&&&0\ar[r] &\Omega^{0}_0\ar[r]\ar[u] & 0&&&\\
&&0  \ar[r]  &\Omega^{2}_0\ar[r] \ar[u] & \Omega^{1}_1\ar[r] \ar[u]&  \Omega^{0}_2\ar[r] \ar[u] &0&&\\
&0  \ar[r] &\Omega^{4}_0\ar[r] \ar[u] & \Omega^{3}_1\ar[r] \ar[u]& \Omega^{2}_2\ar[r] \ar[u] &\Omega^{1}_3\ar[r] \ar[u] &\Omega^{0}_4\ar[r] \ar[u]&0& \\
0  \ar[r] &\Omega^{6}_0\ar[r] \ar[u] & \Omega^{5}_1\ar[r] \ar[u]& \Omega^{4}_2\ar[r] \ar[u] \ar[ur] &\Omega^{3}_3\ar[r] \ar[u] &\Omega^{2}_4\ar[r] \ar[u] &\Omega^{1}_5\ar[r] \ar[u]&\Omega^{0}_6\ar[r] \ar[u]&0 \\
&\ar[u]&\ar[u]&\ar[u]&\ar[u]&\ar[u]&\ar[u]&\ar[u]&}
\]
\caption{Graphical interpretation: spaces of polynomial-valued differential forms.}
\label{triangle1}
\end{figure}

\begin{figure}[h]
\[
\xymatrix@=15pt{&&&&0&0&&&&\\
&&&0\ar[r] &\Omega^{1}_0\ar[r]\ar[u]&\Omega^{0}_1\ar[r]\ar[u] & 0&&&\\
&&0  \ar[r]  &\Omega^{3}_0\ar[r] \ar[u] & \Omega^{2}_1\ar[r] \ar[u]&  \Omega^{1}_2\ar[r] \ar[u] &\Omega^{0}_3\ar[r]\ar[u]&0&&\\
&0  \ar[r] &\Omega^{5}_0\ar[r] \ar[u] & \Omega^{4}_1\ar[r] \ar[u]& \Omega^{3}_2\ar[r] \ar[u] &\Omega^{2}_3\ar[r] \ar[u] &\Omega^{1}_4\ar[r] \ar[u]&\Omega^{0}_5\ar[r]\ar[u]&0& \\
0  \ar[r] &\Omega^{7}_0\ar[r] \ar[u] & \Omega^{6}_1\ar[r] \ar[u]& \Omega^{5}_2\ar[r] \ar[u] &\Omega^{4}_3\ar[r] \ar[u] &\Omega^{3}_4\ar[r] \ar[u]&\Omega^{2}_5\ar[r] \ar[u]&\Omega^{1}_6\ar[r] \ar[u]&\Omega^{0}_7\ar[r]\ar[u]&0 \\
&\ar[u]&\ar[u]&\ar[u]&\ar[u]&\ar[u]&\ar[u]&\ar[u]&\ar[u]&}
\]
\caption{Graphical interpretation: spaces of polynomial-valued differential forms.}
\label{triangle2}
\end{figure}

\begin{figure}
\[
\xymatrix@=15pt{&&&&&&0&&&&\\
&&&&&0\ar[r] &\Omega^{0}_0\ar[r]\ar[u] & 0&&&\\
&&&&0  \ar[r]  &\Omega^{2}_0\ar[r] \ar[u] & \Omega^{1}_1\ar[r] \ar[u]&  \Omega^{0}_2\ar[r] \ar[u] &0&&\\
&&&0  \ar[r] &\Omega^{4}_0\ar[r] \ar[u] & \Omega^{3}_1\ar[r] \ar[u]& \Omega^{2}_2\ar[r] \ar[u] &\Omega^{1}_3\ar[r] \ar[u] &\Omega^{0}_4\ar[r] \ar[u]&0& \\
&&0  \ar[r] &\Omega^{6}_0\ar[r] \ar[u] & \Omega^{5}_1\ar[r] \ar[u]& \Omega^{4}_2\ar[r] \ar[u] &\Omega^{3}_3\ar[r] \ar[u] &\Omega^{2}_4\ar[r] \ar[u]&0\ar[u]& \\
&0  \ar[r] &\Omega^{8}_0\ar[r] \ar[u] & \Omega^{7}_1\ar[r] \ar[u]& \Omega^{6}_2\ar[r] \ar[u] &\Omega^{5}_3\ar[r] \ar[u] 
&\Omega^{4}_4\ar[r] \ar[u]&0\ar[u]&& \\
&&\ar[u]&\ar[u]&\ar[u]&\ar[u]&\ar[u]&}
\]
\caption{Bosonic limit $m=4, n=0$.}
\label{boslim}
\end{figure}

\begin{figure}
\[
\xymatrix@=15pt{&&0&&&&\\
&0\ar[r] &\Omega^{0}_0\ar[r]\ar[u] & 0&&&\\
0  \ar[r]  &\Omega^{2}_0\ar[r] \ar[u] & \Omega^{1}_1\ar[r] \ar[u]&  \Omega^{0}_2\ar[r] \ar[u] &0&&\\
&0  \ar[r]\ar[u] &\Omega^{2}_2\ar[r] \ar[u] & \Omega^{1}_3\ar[r] \ar[u]& \Omega^{0}_4\ar[r] \ar[u] &0& \\
&&0  \ar[r]\ar[u] &\Omega^{2}_4\ar[r] \ar[u] & \Omega^{1}_5\ar[r] \ar[u]& \Omega^{0}_6\ar[r] \ar[u] &0& \\
&&&0  \ar[r]\ar[u] &\Omega^{2}_6\ar[r] \ar[u] & \Omega^{1}_7\ar[r] \ar[u]& \Omega^{0}_8\ar[r] \ar[u] &0& \\
&&&&\ar[u]&\ar[u]&\ar[u]&}
\]
\caption{Fermionic limit $m=0, n=1$.}
\label{ferlim}
\end{figure}

\newpage

\end{document}